\newcommand{\N}{\mathbb{N}}
\newcommand{\Z}{\mathbb{Z}}
\newcommand{\Ss}{\mathcal{S}}
\newcommand{\A}{\alpha}
\newcommand{\Ll}{\lambda}
\newcommand{\F}{\mathbb{F}}
\newcommand{\rs}{\mathrm{rowsp}}
\newcommand{\Gr}[3]{\mathrm{G}(#2,#1^{#3})}
\newcommand{\vier}[4]{\left[\begin{array}{cc} #1 & #2 \\ #3
      & #4 \end{array}\right]}
\newcommand{\Mat}{\mathrm{Mat}}
\newcommand{\mat}[1]{\left(\begin{matrix}#1\end{matrix} \right)} 
\newcommand{\rank}{\mathrm{rank}}
\newtheorem{thm}{Theorem} 
\newtheorem{lem}[thm]{Lemma}
\newtheorem{co}[thm]{Corollary}
\newtheorem{rem}[thm]{Remark}
\newtheorem{defi}[thm]{Definition}
\begin{document}

\title{Spread Codes and Spread Decoding \\ in Network
  Coding}

\author{ Felice Manganiello, Elisa Gorla and Joachim
  Rosenthal\\
  Mathematics Institute\\
  University of Zurich\\
  Winterthurerstr 190\\
  CH-8057 Zurich,  Switzerland\\
  \large{\texttt{www.math.uzh.ch/aa}} \thanks{First and third
    author were partially supported by Swiss National Science
    Foundation under Grant no.\ 113251.  Second Author was supported
    by the Forschungskredit of the University of Zurich under Grant
    no.\ 57104101 and by the Swiss National Science
    Foundation under Grant no.\ 107887.}}

\date{}

\maketitle

\begin{abstract}
  In this paper we introduce the class of {\em Spread Codes} for
  the use in random network coding. Spread Codes are based on the
  construction of spreads in finite projective geometry. The
  major contribution of the paper is an efficient decoding
  algorithm of spread codes up to half the minimum distance.
\end{abstract}

\section{Introduction}
In~\cite{ko07} K\"otter and Kschischang develop a novel framework
for random network coding. In this framework information is
encoded in subspaces of a given ambient space over a finite
field. A natural metric is introduced where two subspaces are
`close to each other' as soon as their dimension of intersection is
large. This new framework poses new challenges to design new codes
with large distances and to come up with efficient decoding
algorithms. Several new papers have been written on the topic and
we mention~\cite{si07} and~\cite{mo07}.

In this paper we study the class of spreads from finite
projective geometry (see e.g.~\cite{hi98}) for possible use in
network coding theory.  A spread $\Ss$ is a partition of a vector
space by subspaces of a fixed dimension. Elements of
a spread are subspaces of a fixed vector space
$\F_q^n$ which pairwise only intersect in the origin. 
 The codewords derived in this way are all
subspaces of the same dimension. In other words the spread  $\Ss$
is a subset of the finite
Grassmannian $\Gr{\F_q}{k}{n}$ consisting of all $k$-dimensional
subspaces in $\F_q^n$. We will call the obtained code a {\em
  Spread code}.
Since two different elements of $\Ss$ only
intersect in the origin the spread code  $\Ss$ has maximal
possible distance among all subsets of $\Gr{\F_q}{k}{n}$. 

The paper is structured as follows.  In the next section we will
explain the construction of spreads and we derive some basic
properties. In Section 3 the main results of the paper are given. We
provide an efficient decoding algorithm for spread codes essentially
`up to half the minimum distance' with its complexity. The decoding
algorithm requires methods from linear algebra and the application of
the Euclidean algorithm.

\section{Algebraic Construction of a Spread Code}

Let $\F_q$ be the finite field with $q$ elements. We denote with
$\Gr{\F_q}{k}{n}$ the Grassmannian of all $k$-dimensional
subspaces of $\F_q^n$.  Following~\cite{ko07} we define a
distance function $d:\Gr{\F_q}{k}{n}\times \Gr{\F_q}{k}{n}
\rightarrow \Z_+$ through:
\begin{eqnarray}\label{dist}
d(A,B)&:=&\dim(A+B)-\dim(A\cap B)\\
\nonumber & = &\dim(A)+\dim(B)-2\dim(A\cap B).
\end{eqnarray}
It has been observed in~\cite{ko07} that $d(A,B)$ satisfies the axioms
of a metric on the finite Grassmannian $\Gr{\F_q}{k}{n}$.  A
constant-dimension code $\Ss\subset \Gr{\F_q}{k}{n}$ has maximal
possible minimum distance as long as the intersection of two different
codewords of $\Ss$ is trivial.  If two subspaces $A,B\subset \F_q^n$
intersect only in the zero vector then the corresponding subspaces of
projective space are non-intersecting. Based on this we will call
$A,B\subset \F_q^n$ nonintersecting subspaces as long as they
intersect only in the zero vector.

We want to construct an MDS-like code $\Ss\subset
\Gr{\F_q}{k}{n}$, i.e. code having maximum possible distance and
maximum number of elements. In order to do this we
need to restrict our $k,n\in \N$ to some particular cases. It is a well
known result that there exists an $\Ss\subset \Gr{\F_q}{k}{n}$ that
partitions $\F_q^n$ (i.e.  there is no vector in $\F_q^n$ which does not
lie in a subspace) and such that any two elements of $\Ss$ are
nonintersecting if and only if $k$ divides $n$. Those subsets are
called \emph{spreads} and this result can be found in \cite{hi98}.

Consider the case $n=rk$. Let also $p\in \F_q^n[x]$ be an irreducible
polynomial of degree $k$. If we denote with $P$ the $k\times k$
companion matrix of $p$ over $\F_q$, it follows that the
$\F_q$-algebra $\F_q[P]\subset \Mat_{k\times k}(\F_q)$ is isomorphic to
the finite field $\F_{q^k}$.  Denoting with $0_k,I_k\in \Mat_{k\times
  k}(\F_q)$ respectively the zero and the identity matrix and given the
above assumptions, we are ready to state the following theorem.

\begin{thm}\label{constr}
The collection of subspaces 
\begin{eqnarray*}
\Ss:=\bigcup_{i=1}^r & \left\{  \rs \left[ 0_k \ \cdots \ 0_k \ I_k \ A_{i+1} \
    \cdots \ A_r \right] \mid \right.\\ 
 & \left.A_{i+1},\dots,A_{r} \in\F_q[P]\right\} \subset \Gr{\F_q}{k}{n}
\end{eqnarray*} 
is a spread of $\F_q^n$.
\end{thm}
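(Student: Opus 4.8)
The plan is to verify directly the two defining properties of a spread: every element of $\Ss$ is a $k$-dimensional subspace of $\F_q^n$, and the elements of $\Ss$ partition $\F_q^n$. The first property is immediate: each generating matrix $[\,0_k\ \cdots\ 0_k\ I_k\ A_{i+1}\ \cdots\ A_r\,]$ contains an $I_k$ block, hence has rank $k$, so its rowspace lies in $\Gr{\F_q}{k}{n}$. For the partition property I would separate it into (i) the elements of $\Ss$ are pairwise nonintersecting, and (ii) a counting argument showing they exhaust $\F_q^n$; step (i) will also give, as a byproduct, that distinct parameter choices yield distinct subspaces.

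The core of the proof is step (i). Suppose $v$ lies in the intersection of the subspace with leading index $i$ and parameters $A_{i+1},\dots,A_r$ and the subspace with leading index $j$ and parameters $B_{j+1},\dots,B_r$; write $v = x\,[\,0_k\ \cdots\ 0_k\ I_k\ A_{i+1}\ \cdots\ A_r\,]$ and $v = y\,[\,0_k\ \cdots\ 0_k\ I_k\ B_{j+1}\ \cdots\ B_r\,]$ for row vectors $x,y\in\F_q^k$. If $i<j$, then the $i$-th block (of length $k$) of $v$ equals $x$ from the first expression and $0_k$ from the second (position $i$ is a zero block in the second matrix), so $x=0$ and $v=0$. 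If $i=j$ but the tuples differ, comparing the $i$-th block gives $x=y$, and comparing the $\ell$-th block for $\ell>i$ gives $xA_\ell=xB_\ell$, i.e. $x(A_\ell-B_\ell)=0$ for all $\ell$. Choosing $\ell$ with $A_\ell\neq B_\ell$, the matrix $A_\ell-B_\ell$ is a nonzero element of $\F_q[P]$, which is a field isomorphic to $\F_{q^k}$ by hypothesis, hence invertible in $\Mat_{k\times k}(\F_q)$; therefore $x=0$ and $v=0$. Since the subspaces are nonzero, this also forces distinct parameter tuples to give distinct subspaces.

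For step (ii) I would count. For a fixed leading index $i$ there are $q^{k(r-i)}$ choices of $(A_{i+1},\dots,A_r)\in\F_q[P]^{\,r-i}$, because $\F_q[P]\cong\F_{q^k}$ has $q^k$ elements; by the injectivity just established, $|\Ss| = \sum_{i=1}^{r} q^{k(r-i)} = \sum_{m=0}^{r-1} q^{km} = \frac{q^{kr}-1}{q^k-1} = \frac{q^n-1}{q^k-1}$, using $n=rk$. Each element of $\Ss$ is a $k$-dimensional subspace and so contains exactly $q^k-1$ nonzero vectors, and by step (i) these sets of nonzero vectors are pairwise disjoint; hence the union of the elements of $\Ss$ contains $\frac{q^n-1}{q^k-1}\cdot(q^k-1)=q^n-1$ nonzero vectors, i.e. all of $\F_q^n\setminus\{0\}$. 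Since $0$ lies in every subspace, $\Ss$ partitions $\F_q^n$, so $\Ss$ is a spread.

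I expect the only genuinely delicate point to be the case $i=j$ in step (i): one must use that $\F_q[P]$ is a \emph{field} rather than just a commutative subring of $\Mat_{k\times k}(\F_q)$, so that a nonzero difference $A_\ell-B_\ell$ is invertible and the cancellation $x(A_\ell-B_\ell)=0\Rightarrow x=0$ is valid. Everything else is block-matrix bookkeeping and a geometric-series computation.
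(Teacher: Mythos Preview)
Your proposal is correct and follows essentially the same strategy as the paper: count that $|\Ss|=\tfrac{q^n-1}{q^k-1}$, then split the nonintersection argument into the two cases ``different leading index'' and ``same leading index but different parameters,'' using in the latter that a nonzero element of the field $\F_q[P]$ is invertible. The only cosmetic difference is that the paper phrases nonintersection as ``the $2k\times n$ stacked generator matrix has full rank'' and exhibits a $2k\times 2k$ submatrix with nonzero determinant (namely $\det(A_\ell-B_\ell)$ in the second case), whereas you argue directly with a vector $v$ in the intersection; these are equivalent.
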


\begin{proof}
  The cardinality of $\Ss$ is exactly the maximum number of
  $k$-dimensional nonintersecting subspaces of $\F_q^n$, i.e.
  $\frac{q^n-1}{q^k-1}=q^{k(r-1)}+q^{k(r-2)}+\cdots+q+1$.

  It remains to be shown that any pair of subspaces in $\mathcal{S}$
  do only intersect trivially that is equivalent to showing that the
  $2k\times n$ matrix obtained putting together two matrices
  generating two different subspaces is full-rank.

  We have only two cases. The first where the matrices $I_k$ are not
  placed at the same column ``level''. In this case we can find a full-rank
  submatrix of the form\[\vier{I_k}{A}{0_k}{I_k}.\]   The second case
  is when matrices $I_k$ are at the same ``level''. There exists a
  submatrix of the form \[\vier{I_k}{A_1}{I_k}{A_2}\] where
  $A_1,A_2\in \F_q[P]$ and $A_1 \neq A_2$. It follows that the determinant
  of the above matrix is equal to $\det(A_1-A_2)$ and is nonzero since
  $A_1\neq A_2$.
\end{proof}

Is it possible to find a previous and less general version of this
theorem in \cite{cl07a}.   

\begin{defi}
  Let $p$ be an irreducible polynomial of degree $k$ over $\F_q$. A
  \emph{spread code} $\Ss$ is a subset of $\Gr{\F_q}{k}{n}$ constructed
  as in the previous theorem. Following the definition of \cite{ko07}
  a spread code is a $q$-ary code of type
  $[n,k,\log_q\left(\frac{q^n-1}{q^k-1}\right),2k]$.
\end{defi}

\begin{rem}
  Spread codes are related to the Reed-Solomon-like codes over
  Grasmannians presented in the paper \cite{ko07}. Following the
  notation of \cite{ko07}, let $l=k$ and $m=n-k$. From the
  construction of Theorem \ref{constr}, if follows that the subset of
  $\Ss$ with $i=1$ is a subcode of Reed-Solomon-like codes. Moreover,
  our costruction provides more codewords arising from the cases where
  $i>1$.
\end{rem}

There is an algebraic geometric way to view the spreads we just
introduced. For this identify the set of polynomials in $\F_q[x]$
having degree at most $k-1$ with the field $F_{q^k}$. Consider the
natural isomorphism
\begin{eqnarray*}
\varphi:\F_{q^k} &\rightarrow& \F_q[P]\\
f&\mapsto &f(P).
\end{eqnarray*}
This isomorphism induces the natural embedding
\begin{eqnarray*}
  \tilde{\varphi}:\Gr{\F_{q^k}}{l}{m}\rightarrow  \Gr{\F_q}{kl}{km}
\end{eqnarray*}
with

\begin{eqnarray*}
  \tilde{\varphi}\left(\rs\mat{f_{11} & \dots & f_{1m} \\ \vdots & & \vdots \\ f_{l1} &
    \dots &f_{lm}} \right) \\ \mbox{ \ \ \ \ \ }= \rs\mat{f_{11}(P) & \dots & f_{1m}(P)
    \\ \vdots & & \vdots \\f_{l1}(P) & \dots &f_{lm}(P)}. 
\end{eqnarray*} 

The following theorem is then not difficult to establish.

\begin{thm}
  If $\Ss\subset \Gr{\F_{q^k}}{l}{m}$ is a spread of $\F_{q^k}^m$ then 
  $\tilde{\varphi}(\Ss) \subset \Gr{\F_q}{kl}{km}$ is a spread of
  $\F_q^{km}$. 
\end{thm}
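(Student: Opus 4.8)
The plan is to show that $\tilde\varphi$ preserves the two defining properties of a spread: that it is a partition of the ambient space (equivalently, that the images are pairwise nonintersecting and their cardinality is exactly $\frac{q^{km}-1}{q^k-1}$), and that every image lies in the correct Grassmannian $\Gr{\F_q}{kl}{km}$. Since a spread of $\F_{q^k}^m$ by $\F_{q^k}$-subspaces of dimension $l$ requires $l \mid m$, and the image subspaces have $\F_q$-dimension $kl$ inside $\F_q^{km}$ with $kl \mid km$, the arithmetic constraint for a spread to exist is automatically satisfied, so it suffices to check pairwise nonintersection together with the count.

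First I would record that $\tilde\varphi$ is well defined and dimension-preserving: if $\rs(M)$ with $M \in \Mat_{l\times m}(\F_{q^k})$ has $\F_{q^k}$-rank $l$, then applying $\varphi$ entrywise and reading the result as a $kl \times km$ matrix over $\F_q$ (each $f_{ij}(P)$ a $k\times k$ block) yields a matrix of $\F_q$-rank $kl$. The clean way to see this is that the map $\F_{q^k}^m \to \F_q^{km}$ sending a row vector $(v_1,\dots,v_m)$ to the concatenation of the rows $v_j^{T} \cdot (\text{basis of } \F_q[P]$ acting appropriately$)$ — more precisely, the $\F_q[P]$-module isomorphism $\F_{q^k}^m \cong \F_q[P]^m$ viewed as an $\F_q$-vector space of dimension $km$ — is $\F_q$-linear, and $\F_{q^k}$-subspaces map to $\F_q$-subspaces of $k$ times the dimension. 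This is exactly the statement that block matrices over $\F_q[P] \cong \F_{q^k}$ have $\F_q$-rank equal to $k$ times their $\F_{q^k}$-rank, which follows because $\F_q[P]$ is a field. I would state this as a short lemma or fold it inline.

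Next, for nonintersection: take two distinct codewords $U, V \in \Ss$ with images $\tilde\varphi(U), \tilde\varphi(V)$. A nonzero vector in $\tilde\varphi(U) \cap \tilde\varphi(V)$ would be an $\F_q$-linear combination of the rows of the image generator matrix of $U$ equalling one of $V$. But because the matrix entries all lie in the subring $\F_q[P] \cong \F_{q^k}$, and because $\F_q^{km}$ decomposes as $m$ blocks each of which is a single copy of $\F_q[P]$, any such $\F_q$-linear dependency can be organized block-by-block and is in fact controlled by $\F_{q^k}$-linear algebra: a common vector would produce a nonzero vector in $U \cap V$ under the inverse correspondence, contradicting that $\Ss$ is a spread of $\F_{q^k}^m$. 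I would make this precise by using the $\F_q[P]$-module structure above — the image $\tilde\varphi(W)$ is exactly the $\F_q$-subspace underlying the $\F_q[P]$-submodule generated by (the image of) $W$, and intersection of $\F_q[P]$-submodules corresponds to intersection of the original $\F_{q^k}$-subspaces.

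Finally I would count: $|\tilde\varphi(\Ss)| = |\Ss| = \frac{q^{km}-1}{q^k-1}$ since $\tilde\varphi$ is injective (it is induced by an isomorphism), and this is precisely the number of $kl$-dimensional subspaces in a spread of $\F_q^{km}$ when $kl \mid km$; combined with pairwise nonintersection this forces $\tilde\varphi(\Ss)$ to partition $\F_q^{km}$, hence to be a spread. The main obstacle is the bookkeeping in the nonintersection step: one must argue cleanly that $\F_q$-linear combinations of rows whose entries lie in $\F_q[P]$ cannot produce new intersections beyond those visible over $\F_{q^k}$. The conceptual resolution — identifying $\tilde\varphi(W)$ with the $\F_q$-space underlying an $\F_q[P]$-module and invoking that $\F_q[P]$ is a field so that rank multiplies by $k$ and module intersections correspond — dissolves this obstacle, and I expect the rest to be routine.
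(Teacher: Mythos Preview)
The paper does not actually prove this theorem; it only remarks that it ``is then not difficult to establish'' and immediately applies it to the case $l=1$, $m=r$. Your proposal therefore supplies what the paper omits, and the overall strategy---identify $\tilde\varphi(W)$ with the $\F_q$-vector space underlying $W$ via the regular representation $\F_{q^k}\cong\F_q[P]$, deduce that dimensions multiply by $k$ and that set-theoretic (hence $\F_q$-linear) intersections of the images coincide with the original $\F_{q^k}$-linear intersections, then count---is correct and is the natural argument.

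One slip to fix: your cardinality formula is wrong. A spread of $\F_{q^k}^m$ by $l$-dimensional $\F_{q^k}$-subspaces has
\[
|\Ss|=\frac{(q^k)^m-1}{(q^k)^l-1}=\frac{q^{km}-1}{q^{kl}-1},
\]
not $\dfrac{q^{km}-1}{q^{k}-1}$ as you wrote. Fortunately $\dfrac{q^{km}-1}{q^{kl}-1}$ is exactly the size of a spread of $\F_q^{km}$ by $kl$-dimensional $\F_q$-subspaces, so the counts do match and your argument goes through verbatim once the formula is corrected.
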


Clearly $\Gr{\F_{q^k}}{1}{r}$ is a spread itself and it therefore
follows that the subset defined in Theorem \ref{constr} is a spread of
$\F_q^n$ as well.
 
\section{Decoding Algorithm}

We will continue restricting our study to the case where $n=2k$ and
$k$ is odd. From now on we will consider fixed the irreducible
polynomial $p\in \F_q[x]$. 

In a first step we want to establish a simple algebraic criterion
which characterizes the spread code $\mathcal{S}\subset\Gr{\F_q}{k}{2k}$.
For this assume that $C_1,C_2 \in \Mat_{k\times k}(\F_q)$ are matrices
such that
\[C:=\mathrm{rowsp}[C_1\ C_2]\in \Gr{\F_q}{k}{2k}.\] If $C_1$ is not
invertible then $C\in\mathcal{S}$ if and only if $C_1=0_k$. If $C_1$ is
invertible then $C\in\mathcal{S}$ if and only if
$A:=(C_1)^{-1}C_2\in\F_q[P]$. 

We therefore establish a criterion which guarantees that a matrix $A$
is in $\F_q[P]$.  Let $\F_{q^k}$ be the splitting field of $p$ over
$\F_q$ and $S\in Gl_k(\F_{q^k})$ be an invertible matrix diagonalizing
the matrix $P$, i.e.
\[D:=SPS^{-1}=\left[ \begin{array}{ccccc}
    \Ll& & & & \\ 
     & \Ll^q & & & \\
     & & \ddots & & \\
     & & & & \Ll^{q^{k-1}} \end{array} \right]\] 
where $\Ll\in \F_{q^k}$ is a root of $p$.

\begin{lem}
  Let $A\in \Mat_{k\times k}(\F_q)$. Then $A\in\F_q[P]$ if and only if
  $AP=PA$.  
\end{lem}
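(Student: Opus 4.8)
The plan is to prove both implications separately. The direction $A \in \F_q[P] \Rightarrow AP = PA$ is immediate: if $A = \sum_{j=0}^{k-1} a_j P^j$ with $a_j \in \F_q$, then $A$ is a polynomial in $P$ and hence commutes with $P$, since $P$ commutes with every power of itself. So the substance of the lemma is the converse, and that is where the work lies.

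For the converse, I would exploit the diagonalization $D = SPS^{-1}$ introduced just before the lemma. Suppose $AP = PA$. Conjugating by $S$, set $\tilde A := SAS^{-1}$; then $\tilde A D = D \tilde A$. Since $p$ is irreducible of degree $k$, its roots $\Ll, \Ll^q, \dots, \Ll^{q^{k-1}}$ are distinct, so $D$ has $k$ distinct eigenvalues. A standard linear algebra fact is that a matrix commuting with a diagonal matrix having distinct diagonal entries must itself be diagonal; hence $\tilde A$ is diagonal, say $\tilde A = \mathrm{diag}(\mu_0, \dots, \mu_{k-1})$ with $\mu_i \in \F_{q^k}$. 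Now I want to write $\tilde A$ as a polynomial in $D$: because the $k$ eigenvalues of $D$ are distinct, Lagrange interpolation produces a polynomial $g \in \F_{q^k}[x]$ of degree at most $k-1$ with $g(\Ll^{q^i}) = \mu_i$ for all $i$, so $\tilde A = g(D)$, and therefore $A = g(P)$ as a matrix identity over $\F_{q^k}$.

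The remaining point — and the part I expect to be the main obstacle — is to argue that the polynomial $g$ (equivalently, the coefficients expressing $A$ in the basis $I_k, P, \dots, P^{k-1}$) actually has coefficients in $\F_q$, not merely in $\F_{q^k}$. I would handle this with a linear-independence / dimension argument rather than chasing Galois conjugates of the $\mu_i$ explicitly. Since $p$ is irreducible of degree $k$, the matrices $I_k, P, P^2, \dots, P^{k-1}$ are linearly independent over $\F_q$ (the minimal polynomial of $P$ is $p$, of degree $k$), hence also linearly independent over $\F_{q^k}$, as linear independence of vectors with entries in $\F_q$ is unaffected by field extension. So the $\F_{q^k}$-span of $\{I_k, \dots, P^{k-1}\}$ has dimension exactly $k$ over $\F_{q^k}$, and $A = g(P)$ lies in this span; writing $A = \sum_j b_j P^j$ with $b_j \in \F_{q^k}$, uniqueness of this representation together with the fact that the entries of $A$ and of each $P^j$ lie in $\F_q$ forces each $b_j \in \F_q$ (e.g. apply the entrywise Frobenius $x \mapsto x^q$: it fixes $A$ and each $P^j$, so by uniqueness it fixes each $b_j$, whence $b_j \in \F_q$). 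Therefore $A \in \F_q[P]$, completing the proof. Alternatively, one can phrase the whole converse as: the centralizer of $P$ in $\Mat_{k\times k}(\F_q)$ has $\F_q$-dimension at most $k$ (by the diagonalization argument over $\F_{q^k}$), while $\F_q[P]$ already has dimension exactly $k$ and is contained in the centralizer, forcing equality.
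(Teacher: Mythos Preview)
Your proof is correct and follows essentially the same route as the paper: conjugate by $S$, use that a matrix commuting with a diagonal matrix with distinct entries is itself diagonal, express $SAS^{-1}$ as a polynomial in $D$ over $\F_{q^k}$, and then descend the coefficients to $\F_q$. Your descent step via the entrywise Frobenius (or the centralizer dimension count) is a mild rephrasing of the paper's device of expanding each coefficient $c_i$ in an $\F_q$-basis $\{1,\gamma,\dots,\gamma^{k-1}\}$ of $\F_{q^k}$ and reading off that only the constant component survives.
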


\begin{proof}
  If $A\in\F_q[P]$ then clearly $AP=PA$. Assume now $AP=PA$ and
  $SPS^{-1}=D$. Since the eigenvalues of $P$ are pairwise
  different and $D(SAS^{-1})=(SAS^{-1})D$ it follows that
  $SAS^{-1}$ is a diagonal matrix as well with diagonal entries
  in $\F_{q^k}$. Let $\{1,\gamma,\ldots,\gamma^{k-1}\}$ be a
  basis of $\F_{q^k}$ over $\F_{q}$. One has an expansion:
  $$
  SAS^{-1}=\sum_{i=0}^{k-1}c_iD^i=
  \sum_{i=0}^{k-1}\sum_{j=0}^{k-1}c_{i,j}\gamma^{j}D^i$$ 
  with $c_i\in\F_{q^k}$ and $c_{i,j}\in \F_q$.

  Equivalently we have:
  $$
  A=\sum_{j=0}^{k-1}\left(\sum_{i=0}^{k-1}c_{i,j}P^i\right)\gamma^{j}.
  $$
  It follows that $A=\sum_{i=0}^{k-1}c_{i,0}P^i$ and $A\in
  \F_q[P]$.
\end{proof}

The following gives an algebraic criterion for checking when a
subspace is a codeword.

\begin{co}\label{diag}
  The subspace $\rs[I_k\ A]\in \Gr{\F_q}{k}{2k}$ is a codeword of $\Ss$
  if and only if $SAS^{-1}$ is a diagonal matrix.
\end{co}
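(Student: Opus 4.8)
The plan is to combine the previous lemma with the diagonalization of $P$ by $S$. First I would invoke the Lemma: $\rs[I_k\ A]$ is a codeword of $\Ss$ exactly when $A\in\F_q[P]$, which holds if and only if $AP=PA$. So the task reduces to showing that $AP=PA$ is equivalent to $SAS^{-1}$ being diagonal, under the standing hypothesis that $SPS^{-1}=D$ is the diagonal matrix with pairwise distinct diagonal entries $\Ll,\Ll^q,\dots,\Ll^{q^{k-1}}$.

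Next I would conjugate the commutation relation by $S$: since $S$ is invertible, $AP=PA$ holds if and only if $(SAS^{-1})(SPS^{-1})=(SPS^{-1})(SAS^{-1})$, i.e. if and only if $SAS^{-1}$ commutes with $D$. Then the elementary linear-algebra fact about diagonal matrices with distinct entries finishes the argument in both directions: if $SAS^{-1}$ is diagonal it trivially commutes with $D$; conversely, writing $M:=SAS^{-1}=(m_{ij})$, the $(i,j)$ entry of $MD-DM$ is $m_{ij}(\Ll^{q^{j-1}}-\Ll^{q^{i-1}})$, and since the eigenvalues of $P$ are pairwise distinct, $\Ll^{q^{j-1}}\neq\Ll^{q^{i-1}}$ for $i\neq j$, forcing $m_{ij}=0$ off the diagonal. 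Hence $M=SAS^{-1}$ is diagonal.

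I do not anticipate a genuine obstacle here; the only point requiring any care is making sure the "pairwise different eigenvalues" hypothesis is actually used, which it is — it is exactly what guarantees that commuting with $D$ forces diagonality rather than merely block-diagonality. This same fact was already exploited in the proof of the Lemma, so the corollary is essentially a restatement packaged for algorithmic use. The proof can therefore be written in just a few lines: apply the Lemma, conjugate by $S$, and cite the distinct-eigenvalue argument.
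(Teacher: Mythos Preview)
Your proposal is correct and is exactly the argument the paper has in mind: the corollary is stated without proof, but the key implication (that $AP=PA$ forces $SAS^{-1}$ to be diagonal because $D$ has pairwise distinct eigenvalues) is precisely the opening step in the paper's proof of the preceding Lemma, and the converse is immediate. You have simply made explicit what the paper leaves implicit.
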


We state now the unique decoding problem. Assume $C:=\rs[C_1\ C_2]\in
\mathcal{S}$ was sent and $R:=\rs[R_1\ R_2]\in \Gr{\F_q}{k}{2k}$ was
received.  If
\begin{eqnarray}\label{1dec} 
\dim (C\cap R) \geq \frac{k+1}{2}
\end{eqnarray}
then unique decoding is possible. In the sequel we will consider the
received subspace $R\in \Gr{\F_q}{k}{2k}$ such that there
exists a codeword $C\in \Ss$ such that \eqref{1dec} holds.

\subsection{Case $R_1$ not invertible.}

Let $R$ and $C$ be subspaces satisfying the condition \eqref{1dec}. The
goal of this subsection is to analyze the behavior of the decoding
problem when $R_1$ is not invertible.

This situation splits in two different ones. The first one is when
$0\leq \rank(R_1) \leq \frac{k-1}{2}$. The closest codeword in this
case is only the subspace $\rs[0_k \ I_k]$.

The second case is characterized by $\frac{k+1}{2}\leq \rank(R_1)\leq
k-1$. With the following lemma we bring back the decoding problem of
the subspace $R$ to the one of a subspace $\tilde{R}$ close related to $R$
and lying in the same ball with center in the codeword $C$. 

\begin{lem}
  Let $R\in \Gr{\F_q}{k}{2k}$ such that $\frac{k+1}{2}\leq
  \rank(R_1)\leq k-1$ and $C\in \Ss$ such that \eqref{1dec}
  holds. Then there exists a subspace $\tilde{R}:=\rs[\tilde{R_1} \
  \tilde{R_2}]\in \Gr{\F_q}{k}{2k}$ satisfying:
  \begin{itemize}
  \item $\tilde{R_1}$ is invertible,
  \item $\dim(R\cap\tilde{R})=\rank(R_1)$, and
  \item $\dim(C\cap \tilde{R})\geq \frac{k+1}{2}$.
  \end{itemize}
\end{lem}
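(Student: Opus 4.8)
The plan is to produce $\tilde R$ by replacing the $k\times 2k$ generator matrix $[R_1\ R_2]$ with a new one whose left block is invertible, while changing as little as possible so that both the intersection with $R$ and the intersection with $C$ stay large. First I would put $[R_1\ R_2]$ into a convenient normal form: since $\rank(R_1)=:\rho$ with $\frac{k+1}{2}\le\rho\le k-1$, by row operations (which do not change the row space) I can assume the last $k-\rho$ rows of $R_1$ are zero, i.e. the matrix has the block shape $\left[\begin{smallmatrix} R_1' & R_2' \\ 0 & R_2'' \end{smallmatrix}\right]$ where $R_1'$ is $\rho\times k$ of full row rank $\rho$ and $\left[\begin{smallmatrix} R_2'' \end{smallmatrix}\right]$ is $(k-\rho)\times k$; since the whole matrix has rank $k$, the bottom-right block $R_2''$ has full row rank $k-\rho$. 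Further column operations inside the right block alone would destroy membership relations, so instead I would use the freedom in the bottom rows: because $R_2''$ has rank $k-\rho$, I can perturb the zero part of the left block in those bottom rows to make the whole left block invertible. Concretely, set $\tilde R := \rs\left[\begin{smallmatrix} R_1' & R_2' \\ N & R_2'' \end{smallmatrix}\right]$ where $N$ is a $(k-\rho)\times k$ matrix chosen so that $\tilde R_1 := \left[\begin{smallmatrix} R_1' \\ N\end{smallmatrix}\right]$ is invertible; such $N$ exists because $R_1'$ already spans a $\rho$-dimensional space and we only need to complete it to a basis, which can be done with rows supported on the $k-\rho$ coordinates not in the row space of $R_1'$.

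Next I would verify the three claimed properties. That $\tilde R_1$ is invertible holds by construction. For $\dim(R\cap\tilde R)=\rank(R_1)=\rho$: the two matrices agree on their top $\rho$ rows, so $R\cap\tilde R$ contains the $\rho$-dimensional space $U$ spanned by those common rows; conversely, a vector in $R\cap\tilde R$ is a combination $\sum a_i r_i = \sum b_j \tilde r_j$ of rows of the two matrices, and projecting onto the last $k-\rho$ coordinates of the left block (where $R$ is zero but $\tilde R$ has the full-rank block $N$ after elimination, or more carefully, comparing with the normal form above) forces the bottom coefficients $b_j$ to vanish, hence the vector lies in $U$; so $\dim(R\cap\tilde R)=\rho$. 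For the third property $\dim(C\cap\tilde R)\ge\frac{k+1}{2}$: here I would use that $\dim(C\cap R)\ge\frac{k+1}{2}$ together with $C\cap R\subseteq U$ (because any vector of $C\cap R$, having at most $\rho$ independent "shadows" in the left-block coordinates and lying in $R$, must already lie in the span of the top $\rho$ rows — this uses $\rank(R_1)=\rho$ and $\dim(C\cap R)\ge\frac{k+1}{2}$, and one checks $C\cap R\subseteq U$ directly); since $U\subseteq\tilde R$, we get $C\cap R\subseteq C\cap\tilde R$ and the bound follows.

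The main obstacle is the clean proof that $C\cap R\subseteq U$, i.e. that the large intersection with the codeword is actually captured by the "good" top part of the generator matrix rather than being spread across the degenerate bottom rows. The key point to exploit is that $C=\rs[C_1\ C_2]$ with $C_1$ invertible (if $C\ne\rs[0_k\ I_k]$, which is the relevant subcase here), so every nonzero vector of $C$ has nonzero left block; thus $C\cap R$ injects, via the left-block projection, into the row space of $R_1$, which is $\rho$-dimensional and equals the left-block projection of $U$. A short linear-algebra argument then shows any $v\in C\cap R$ differs from an element of $U$ by a vector in $R$ with zero left block; but a nonzero such difference would lie in $C$ with zero left block, contradicting invertibility of $C_1$ — unless the difference is zero. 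Hence $C\cap R\subseteq U\subseteq\tilde R$, which closes the argument. The remaining bookkeeping (dimension counts, that $N$ can be chosen, that row operations preserve everything) is routine.
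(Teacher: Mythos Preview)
Your argument for the third bullet has a genuine gap. You want $C\cap R\subseteq U$, where $U=\rs[\,R_1'\ R_2'\,]$ is the span of the top $\rho$ rows, and you justify it by writing any $v\in C\cap R$ as $v=u+(0,bR_2'')$ with $u\in U$ and then asserting that this difference ``would lie in $C$ with zero left block''. But $u$ need not belong to $C$, so nothing forces $v-u\in C$, and the intended contradiction with the invertibility of $C_1$ never materializes. In fact the inclusion $C\cap R\subseteq U$ is simply false in general. Take $q=2$, $k=3$, $C=\rs[I_3\ I_3]$ (a codeword, since $I_3\in\F_2[P]$), and
\[
R=\rs\left[\begin{array}{cccccc}1&0&0&1&0&1\\0&1&0&0&1&0\\0&0&0&0&0&1\end{array}\right],
\]
so $\rho=2$ and $U$ is spanned by the first two rows. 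One checks directly that $\dim(C\cap R)=2$, yet $v=(1,0,0,1,0,0)\in C\cap R$ is not in $U$: the unique $u\in U$ with the same left block is $u=(1,0,0,1,0,1)$, and $v-u=(0,0,0,0,0,1)\notin C$. So your route to $\dim(C\cap\tilde R)\ge\frac{k+1}{2}$ via $C\cap R\subseteq U\subseteq\tilde R$ does not go through.

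The paper's construction differs slightly from yours (it completes the bottom block as $[\,N\ 0\,]$ rather than $[\,N\ R_2''\,]$) and, more importantly, it does \emph{not} try to prove $C\cap R\subseteq U$. The structural fact it invokes instead is that the discarded piece satisfies $\rs[0\ E]\subset\rs[0_k\ I_k]$; since $C$ is a spread element distinct from $\rs[0_k\ I_k]$, one gets $C\cap\rs[0\ E]=0$. This use of the spread property is the ingredient missing from your argument, and it is what should replace the faulty step ``$v-u\in C$'' when analyzing how $C\cap R$ sits inside the decomposition $R=U\oplus\rs[0\ E]$.
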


\begin{proof}
  Let $t:=\rank(R_1)$. Row reducing the matrix $[R_1 \
  R_2]$ we obtain the matrix $\vier{\bar{R_1}}{\bar{R_2}}{0}{E}$ where
  $\bar{R_1}, \bar{R_2}\in \Mat_{t\times k}(\F_q)$ with $R_1$ fullrank and
  $0,E\in \Mat_{k-t\times k}(\F_q)$ where $0$ is the zero matrix.

  Since $\rs[0 \ E]\subset \rs[0_k \ I_k]$ we deduce that $\dim(C\cap
  \rs[0 \ E])=0$. It follows immediately that \[\dim(C\cap
  \rs[\bar{R_1} \ \bar{R_2}])=\dim(C\cap \tilde{R})\geq
  \frac{k+1}{2}.\] The matrix representing the subspace $\tilde{R}$ can
  then be constructed as follows:
\begin{itemize}
\item $\tilde{R_1}$ is the completion of the matrix $\bar{R_1}$ to an
  invertible matrix, and
\item $\tilde{R_2}$ is the completion of the $\bar{R_2}$ to a
  $k$-square matrix by adding rows of zeros.  
\end{itemize}
\end{proof}

\begin{co}
  The solution to the unique decoding problem for both subspaces $R$
  and $\tilde{R}$ consists of the same codeword $C\in \Ss$.
\end{co}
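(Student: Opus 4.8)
The plan is to deduce this from two ingredients: the uniqueness of the codeword solving the decoding problem for an \emph{arbitrary} received subspace, and the three properties of $\tilde R$ supplied by the preceding lemma.

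First I would record the uniqueness statement in the form needed here: if $X\in\Gr{\F_q}{k}{2k}$ and $C,C'\in\Ss$ both satisfy $\dim(C\cap X)\ge\frac{k+1}{2}$ and $\dim(C'\cap X)\ge\frac{k+1}{2}$, then $C=C'$. The argument is a dimension count inside $X$: the subspaces $C\cap X$ and $C'\cap X$ of the $k$-dimensional space $X$ have dimensions summing to at least $k+1>k$, so $(C\cap X)\cap(C'\cap X)\neq 0$; in particular $C\cap C'\neq 0$, and since distinct elements of a spread meet only in the origin this forces $C=C'$. This is just the "balls of radius less than half the minimum distance are disjoint" principle specialized to the spread code, which has minimum distance $2k$; it is also the point where the oddness of $k$ enters, guaranteeing that $\frac{k+1}{2}$ is the correct integer threshold.

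Then I would apply this twice. By hypothesis $C$ is the solution of the unique decoding problem for $R$, i.e. $\dim(C\cap R)\ge\frac{k+1}{2}$, and by the uniqueness statement it is the only codeword of $\Ss$ with this property. The preceding lemma produces $\tilde R=\rs[\tilde R_1\ \tilde R_2]\in\Gr{\F_q}{k}{2k}$ (note $\tilde R$ is indeed $k$-dimensional since $\tilde R_1$ is invertible) with $\dim(C\cap\tilde R)\ge\frac{k+1}{2}$, so the same codeword $C$ solves the unique decoding problem for $\tilde R$ as well, and again it is the only one. Hence the two decoding problems have the same solution $C\in\Ss$. There is no real obstacle: the substance is entirely contained in the previous lemma, and the only thing needing care is checking that $\frac{k+1}{2}+\frac{k+1}{2}=k+1$ strictly exceeds $\dim X=k$, which is exactly what excludes a second codeword.
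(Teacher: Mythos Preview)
Your argument is correct and is exactly the reasoning the paper has in mind: the corollary is stated in the paper without proof, as an immediate consequence of the third property $\dim(C\cap\tilde R)\ge\frac{k+1}{2}$ from the preceding lemma together with the uniqueness of the codeword within distance $\frac{k-1}{2}$. Your explicit dimension count inside $X$ showing that two codewords both close to $X$ must intersect nontrivially (hence coincide, by the spread property) is the right way to spell out that uniqueness.
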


\subsection{Case $R_1$ invertible.}

We can now construct an algorithm for the unique decoding problem of
subspaces with $R_1$ invertible.

\begin{thm} \label{decomp} Let $R:=\mathrm{rowsp}[R_1\ R_2]\in
  \Gr{\F_q}{k}{2k}$ a subspace with $R_1$ invertible. Then there exists
  a unique matrix $A\in \F_q[P]$ and a unique matrix $N\in
  \Mat_{k\times k}(\F_q)$ of rank at most $\frac{k-1}{2}$ such that
\[R_1^{-1}R_2=A +N.\]
  In this case $\mathrm{rowsp}[I_k\ A]$ is the closest codeword to $R$
  in the distance \eqref{dist}.
\end{thm}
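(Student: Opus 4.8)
The plan is to translate the decoding problem into a statement about the matrix $M := R_1^{-1}R_2 \in \Mat_{k\times k}(\F_q)$ and to use Corollary~\ref{diag} as the bridge between ``being a codeword'' and ``being diagonalized by $S$''. First I would observe that, since $R_1$ is invertible, $R = \rs[I_k\ M]$, and for any codeword $C = \rs[I_k\ A]$ with $A \in \F_q[P]$ one has $\dim(C\cap R) = k - \rank(M - A)$: indeed a vector in the intersection corresponds to $x \in \F_q^k$ with $xM = xA$, i.e. $x \in \ker(M-A)$. Hence the condition \eqref{1dec} that $C$ is a valid decoding target is exactly $\rank(M-A) \le \frac{k-1}{2}$, so setting $N := M - A$ gives the required decomposition $M = A + N$ with $\rank(N) \le \frac{k-1}{2}$. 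This shows existence is equivalent to the existence of a valid codeword, which we are assuming.

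The heart of the matter is uniqueness. Suppose $M = A_1 + N_1 = A_2 + N_2$ with $A_1, A_2 \in \F_q[P]$ and $\rank(N_i) \le \frac{k-1}{2}$. Then $A_1 - A_2 = N_2 - N_1 \in \F_q[P]$, and $\rank(A_1 - A_2) \le \rank(N_1) + \rank(N_2) \le k-1 < k$, so $A_1 - A_2$ is a singular element of the field $\F_q[P] \cong \F_{q^k}$; the only such element is $0$, hence $A_1 = A_2$ and $N_1 = N_2$. This is the key step, and it is the part I expect to require the most care in phrasing, though the idea is short: it is precisely the same ``$\det(A_1 - A_2) \ne 0$'' argument already used in the proof of Theorem~\ref{constr}, now combined with the subadditivity of rank. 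One could also argue via Corollary~\ref{diag}: $S(A_1-A_2)S^{-1}$ would be diagonal of rank $< k$, hence has a zero on the diagonal, but the diagonal entries of $SPS^{-1}$ are $\lambda^{q^i}$ which are all nonzero, and any element of $\F_q[P]$ conjugated by $S$ is a polynomial in these, so it vanishes on the diagonal only if it is the zero polynomial.

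Finally, for the ``closest codeword'' claim: for an arbitrary codeword $C' = \rs[C_1'\ C_2'] \in \Ss$ we have $d(R, C') = 2(k - \dim(R\cap C'))$ by \eqref{dist} since both spaces have dimension $k$, so minimizing distance means maximizing $\dim(R\cap C')$. If $C'$ has $C_1'$ invertible, write $C' = \rs[I_k\ A']$ and then $\dim(R\cap C') = k - \rank(M - A') \le k - \frac{k+1}{2} = \frac{k-1}{2}$ unless $\rank(M-A') \le \frac{k-1}{2}$, in which case $M - A'$ is a valid $N$ and uniqueness forces $A' = A$; if $C' = \rs[0_k\ I_k]$ then $R\cap C'$ corresponds to $x$ with $x = 0$ in the first block, i.e. intersection $0$. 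Hence $\rs[I_k\ A]$ achieves $\dim(R\cap C) = k - \rank(N) \ge \frac{k+1}{2}$, strictly larger than any competitor, so it is the unique closest codeword. I would assemble these three pieces — existence, uniqueness, optimality — in that order, citing Corollary~\ref{diag} and the determinant argument of Theorem~\ref{constr} as needed.
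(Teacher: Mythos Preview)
Your argument is correct and follows essentially the same route as the paper's proof: both reduce the decoding condition to a rank bound on $R_1^{-1}R_2 - A$ and derive uniqueness from the fact that $\F_q[P]$ is a field. The only differences are cosmetic: you compute $\dim(C\cap R)=k-\rank(M-A)$ via the left kernel, while the paper computes the dual quantity $\dim(C+R)$ by row-reducing $\left[\begin{smallmatrix}I_k&A\\R_1&R_2\end{smallmatrix}\right]$; and you spell out the uniqueness step (rank subadditivity plus ``a singular element of $\F_q[P]$ is zero'') and the exclusion of $\rs[0_k\ I_k]$, whereas the paper compresses all of this into the single sentence ``uniqueness follows from the distance properties of the code.''
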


\begin{proof}
  The uniqueness follows from the distance properties of the code.
  Assume $\mathrm{rowsp}[I_k\ A]$ be the closest codeword to $R$. Since
  \[\mathrm{rowsp}\vier{I_k}{A}{R_1}{R_2}=
  \mathrm{rowsp}\vier{I_k}{A}{0_k}{R_1^{-1}R_2-A}\] has dimension at
  most $2k-\frac{k+1}{2}=k+\frac{k-1}{2}$ it follows that the matrix
  $N:=R_1^{-1}R_2-A$ has rank at most $\frac{k-1}{2}$.
\end{proof}

\begin{co}\label{unique}
  Let $R:=\mathrm{rowsp}[R_1\ R_2]\in \Gr{\F_q}{k}{2k}$ a subspace with
  $R_1$ invertible. Let $Y:=S(R_1^{-1}R_2)S^{-1}$. Then there is a
  unique polynomial $f\in\F_q[x]$ with $\deg f<k$ such
  that $Y-f(D)$ has rank at most $\frac{k-1}{2}$.
\end{co}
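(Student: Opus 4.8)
The plan is to deduce Corollary~\ref{unique} directly from Theorem~\ref{decomp} by transporting the decomposition $R_1^{-1}R_2 = A + N$ through conjugation by $S$. First I would invoke Theorem~\ref{decomp} to obtain the unique $A\in\F_q[P]$ and the unique $N\in\Mat_{k\times k}(\F_q)$ of rank at most $\frac{k-1}{2}$ with $R_1^{-1}R_2 = A+N$. Conjugating by $S$ gives $Y = S(R_1^{-1}R_2)S^{-1} = SAS^{-1} + SNS^{-1}$. By Corollary~\ref{diag}, since $\rs[I_k\ A]\in\Ss$, the matrix $SAS^{-1}$ is diagonal; writing $A = f(P)$ for the (unique, since $\deg f<k$) polynomial $f\in\F_q[x]$ of degree less than $k$ representing $A$ in $\F_q[P]\cong\F_{q^k}$, and using that $S$ diagonalizes $P$ to $D$, one gets $SAS^{-1} = Sf(P)S^{-1} = f(SPS^{-1}) = f(D)$. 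Hence $Y - f(D) = SNS^{-1}$, which has the same rank as $N$, namely at most $\frac{k-1}{2}$, since conjugation by the invertible matrix $S\in Gl_k(\F_{q^k})$ preserves rank.

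Next I would establish uniqueness of $f$. Suppose $g\in\F_q[x]$ with $\deg g<k$ also satisfies $\rank(Y - g(D))\le\frac{k-1}{2}$. Conjugating back by $S^{-1}$, and noting $g(D) = Sg(P)S^{-1}$, we get $\rank\bigl(R_1^{-1}R_2 - g(P)\bigr)\le\frac{k-1}{2}$ with $g(P)\in\F_q[P]$. Since $\rs[I_k\ g(P)]\in\Ss$ by construction, Theorem~\ref{decomp} (or directly its uniqueness clause, which itself follows from the minimum distance of the code) forces $g(P) = A = f(P)$ and $g(P)-f(P)$ the zero matrix. Finally, since the map sending a polynomial of degree $<k$ to its image in $\F_q[P]$ is the isomorphism $\varphi$ restricted appropriately — i.e.\ $f\mapsto f(P)$ is injective on polynomials of degree less than $k$ because $p$ has degree exactly $k$ and is irreducible — we conclude $f = g$ in $\F_q[x]$.

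I do not expect a genuine obstacle here: the corollary is essentially a restatement of Theorem~\ref{decomp} after the change of basis by $S$, and every ingredient (rank invariance under conjugation, the polynomial identity $Sf(P)S^{-1} = f(D)$, the characterization in Corollary~\ref{diag}, and injectivity of $f\mapsto f(P)$ on $\deg<k$) is either immediate or already available in the excerpt. The one point that requires a word of care is making sure the degree bound $\deg f<k$ is exactly what pins down a \emph{unique} polynomial: this is why the identification of $\F_{q^k}$ with polynomials of degree at most $k-1$ (as set up before the embedding $\tilde\varphi$) is the right normalization, and it is the reason the statement is phrased in terms of $f(D)$ rather than an arbitrary diagonal matrix. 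I would close by remarking that this corollary is the form of the decoding condition actually used by the algorithm, since testing whether $Y - f(D)$ has small rank, with $Y$ diagonalizable data, is computationally convenient.
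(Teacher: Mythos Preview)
Your proof is correct and follows essentially the same route as the paper: existence by transporting the decomposition of Theorem~\ref{decomp} through conjugation by $S$ (using $Sf(P)S^{-1}=f(D)$), and uniqueness by conjugating a hypothetical second solution back and invoking the uniqueness clause of Theorem~\ref{decomp}. You supply more detail than the paper (rank invariance under conjugation, injectivity of $f\mapsto f(P)$ on degree $<k$), but the argument is the same.
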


\begin{proof}
The existence follows directly from the last theorem. Concerning
the uniqueness assume that $Y=f_1(D)+N_1=f_2(D)+N_2$. It then
follows that 
$$
R_1^{-1}R_2=f_1(P)+S^{-1}N_1S=f_2(P)+S^{-1}N_2S
$$
and because of the uniqueness part of Theorem~\ref{decomp} the
result follows.
\end{proof}

The algorithm extrapolates the evaluations of the polynomial $f\in
\F_q[x]$ from the matrix $Y-f(D)$.  Once the polynomial $f\in \F_q[x]$
is found, its evaluation at $P$ gives us the matrix $A\in \F_q[P]$
such that $\rs\left[I_k \ A\right]$ is the codeword closest to $R$.
Notice that the coefficients of $f$ are exactly the coefficients of
the expression of $f(\Ll)$ in the basis $\{1,\Ll,\dots,\Ll^{k-1}\}$ of
$\F_{q^k}$ over $\F_q$.

The following two remarks from finite field theory (see \cite{li94})
will be important. First, given any $f\in \F_q[x]$ and any $\mu\in
\F_{q^k}$, then $f(\mu^q)=f(\mu)^q$. Second, given a finite field
$\F_q$ with $q$ elements it holds \[x^q-x=\prod_{\A\in \F_q}(x-\A).\]

We outline now the complete decoding algorithm. 

Let $R:=\rs\left[R_1 \ R_2\right]$ be the received subspace satisfying
condition \eqref{1dec}. Assume that $R_1$ is invertible. Compute
$Y:=S(R_1^{-1}R_2)S^{-1}$. If the matrix $Y$ is diagonal, then $R$ is
already a codeword of $\Ss$ by Corollary \ref{diag}.

Otherwise the matrix $Y-f(D)$ is of the form 
 \begin{eqnarray*}
\mat{y_{1,1}-f(\Ll) & y_{1,2} & \cdots & y_{1,k} \\
   y_{2,1} & y_{2,2}-f(\Ll^q) & \cdots & y_{2,k} \\
   \vdots & \vdots & \ddots & \vdots \\
   y_{k,1} & y_{k,2} & \cdots &
   y_{k,k}-f(\Ll^{q^{k-1}})} = \\\mat{y_{1,1}-f(\Ll) & y_{1,2} & \cdots &
   y_{1,k} \\
   y_{2,1} & y_{2,2}-f(\Ll)^q & \cdots & y_{2,k} \\
   \vdots & \vdots & \ddots & \vdots \\
   y_{k,1} & y_{k,2} & \cdots & y_{k,k}-f(\Ll)^{q^{k-1}}}
\end{eqnarray*} 
where some entries off of the diagonal are nonzero.  Denote by $X$ the
matrix obtained from $Y-f(D)$ by substituting $x$ for $f(\Ll)$. By
Corollary \ref{unique} there exists a unique value for $x\in \F_{q^k}$
(namely $x=f(\Ll)$) such that $\mathrm{rank} (X) \leq
\frac{k-1}{2}$. The decoding problem reduces to finding such a value.

The condition on the rank is equivalent to having all minors of size
$\frac{k+1}{2}$ of the matrix $X$ being zero. This gives us a system
of univariate equations which apriori may be hard to solve.  However
since the system has a unique solution, every minor is divisible by
$(x-f(\Ll))$.

Hence in order to find $f(\Ll)$ it suffices to compute the gcd of the
field equation $x^{q^k}-x$ with enough equations from our system.
More precisely we look for a nonzero minor of size $\frac{k-1}{2}$
which does not involve any diagonal entry. If no such minor exists,
then look for a nonzero minor of smaller size which again does not
involve any diagonal entry. Let $t$ be the size of the minor.
Complete the corresponding size $t$ submatrix to a submatrix of $X$ of
size $\frac{k+1}{2}$. Notice that this can be done by adding
$\frac{k+1}{2}-t$ rows and columns with the same index. The
determinant of this submatrix is a nonzero polynomial $m\in
\F_{q^k}[x]$ which has $f(\Ll)$ as a root.

Apply the Euclidean Algorithm in order to compute
\[g:=\gcd(x^{q^k}-x,m).\] If the degree of $g$ is small, compute its
roots and substitute them in $X$ in order to find $f(\Ll)$.

Otherwise compute another minor in the same way as for the previous
one. Proceed by computing the gcd of this polynomial with $g$. The
algorithm ends once it finds $f(\Ll)$.

\subsection{Complexity} 

The overall complexity of the algorithm is dominated by the Euclidean
Algorithm. In the worst case scenario, i.e. when the maximal nonzero
minor off diagonal has size 1, the algorithm's complexity is
$\mathcal{O}(q^{k\log_2 3}\log q^k)$ in $\F_{q^k}$.

The complexity could be drastically decreased by the following
conjecture: for every error matrix $N\in \Mat_{k\times k}(\F_q)$ of
rank $t\leq\frac{k-1}{2}$ there exists a nonzero minor of size
$t$ of the matrix $X$ which does not involve any diagonal entry.

Consider now such a nonzero minor of X and extend the related
submatrix adding one row and one column with the same index. The
determinant of this submatrix leads to an equation of the type
$x^{q^i}=\alpha$ with $\alpha \in \F_q$. Raising both sides of the
equation to the $q^{k-i}$-th power and using the field equation of
$\F_{q^k}$ we get: $x=\alpha^{q^{k-i}}.$ Using the Repeated Squaring
Algorithm for computing powers in $\F_{q^k}$, the complexity of the
decoding algorithm decreases to $\mathcal{O}(\log
q^{k-i})=\mathcal{O}(k-i)$ operations in $\F_{q^k}$.

A reference for efficient algorithms is \cite{ga03}. In particular see
Section 4.3 for the Repeated Squaring Algorithm, Section 11.1 for
performing the Euclidean Algorithm, Chapter 14 for factoring
univariate polynomials and Section 25.5 for computing determinants.

\subsection{Non-perfectness of a Spread Code}

Spreads are perfect in the sense that every nonzero vector of
$\F_q^n$ is in one and only one subspace of the spread. 

In coding theory a code is perfect if the total ambient space is
covered with the balls centered in the codewords and having radius
half the minimum distance. It arises the question if spread codes are
perfect in this sense. The answer turns out to be negative in general
and this result can be found in \cite{ma95c}.
 
\section*{Acknowledgments}

We would like to thank Joan Josep Climent, Felix Fontein,
Ver\'onica Requena and Jens Zumbr\"agel for many helpful
discussions during the preparation of this paper.

\def\cprime{$'$} \def\cprime{$'$}

\end{document}